\documentclass[sigconf, nonacm]{acmart}
\usepackage{amsmath}
\usepackage{graphicx}
\usepackage{algorithm}
\usepackage{algpseudocode}
\usepackage{dsfont}
\usepackage{subcaption}
\usepackage{tikz}
\usetikzlibrary{shapes, arrows.meta, positioning, calc, backgrounds, fit, decorations.pathreplacing}

\usepackage{enumitem}

\newcommand\vldbdoi{XX.XX/XXX.XX}
\newcommand\vldbpages{XXX-XXX}
\newcommand\vldbvolume{14}
\newcommand\vldbissue{1}
\newcommand\vldbyear{2020}
\newcommand\vldbauthors{\authors}
\newcommand\vldbtitle{\shorttitle} 
\newcommand\vldbavailabilityurl{URL_TO_YOUR_ARTIFACTS}
\newcommand\vldbpagestyle{plain} 

\newcommand{\nikos}[1]{{\color{purple} [nikos: #1]}}

\newcommand{\stitle}[1]{\vspace{1ex}\noindent\textbf{#1}}

\begin{document}
\title{CRISP: Correlation-Resilient Indexing via Subspace Partitioning}

\author{Dimitris Dimitropoulos}
\affiliation{%
  \institution{U. of Ioannina \& Archimedes, Athena RC}
  \city{Ioannina}
  \state{}
  \country{Greece}}
\email{ddimitropoulos@cs.uoi.gr}

\author{Achilleas Michalopoulos}
\affiliation{%
  \institution{U. of Ioannina}
  \city{Ioannina}
  \state{}
  \country{Greece}}
\email{amichalopoulos@cs.uoi.gr}

\author{Dimitrios Tsitsigkos}
\affiliation{%
  \institution{Archimedes, Athena RC}
  \city{Athens}
  \state{}
  \country{Greece}}
\email{dtsitsigkos@athenarc.gr}

\author{Nikos Mamoulis}
\affiliation{%
  \institution{U. of Ioannina \& Archimedes, Athena RC}
  \city{Ioannina}
  \state{}
  \country{Greece}}
\email{nikos@cs.uoi.gr}


\begin{abstract}
  As the dimensionality of modern learned representations
increases to thousands of dimensions,
the state-of-the-art Approximate Nearest Neighbor (ANN) indices exhibit severe limitations.
Graph-based methods (e.g., HNSW) suffer from prohibitive memory consumption and routing degradation, while recent randomized quantization and learned rotation approaches (e.g., RaBitQ, OPQ) impose significant $O(N D^2)$ preprocessing overheads. 
We introduce CRISP,
a novel framework designed for ANN search in very-high-dimensional  spaces. Unlike rigid pipelines that apply expensive orthogonal rotations indiscriminately, CRISP employs a lightweight, correlation-aware adaptive strategy that redistributes variance only when necessary, effectively reducing the preprocessing complexity.
We couple this adaptive mechanism with a cache-coherent Compressed Sparse Row (CSR) index structure.
Furthermore, CRISP incorporates a multi-stage dual-mode query engine: a \textit{Guaranteed Mode} that preserves rigorous theoretical lower bounds on recall, and an \textit{Optimized Mode} that leverages rank-based weighted scoring and early termination to reduce query latency. Extensive 
evaluation on datasets of very high dimensionality (up to $D = 4096$) demonstrates that CRISP achieves state-of-the-art query throughput, low construction costs, and peak memory efficiency.

\end{abstract}

\maketitle

\pagestyle{\vldbpagestyle}
\begingroup\small\noindent\raggedright\textbf{PVLDB Reference Format:}\\
\vldbauthors. \vldbtitle. PVLDB, \vldbvolume(\vldbissue): \vldbpages, \vldbyear.\\
\href{https://doi.org/\vldbdoi}{doi:\vldbdoi}
\endgroup
\begingroup
\renewcommand\thefootnote{}\footnote{\noindent
This work is licensed under the Creative Commons BY-NC-ND 4.0 International License. Visit \url{https://creativecommons.org/licenses/by-nc-nd/4.0/} to view a copy of this license. For any use beyond those covered by this license, obtain permission by emailing \href{mailto:info@vldb.org}{info@vldb.org}. Copyright is held by the owner/author(s). Publication rights licensed to the VLDB Endowment. \\
\raggedright Proceedings of the VLDB Endowment, Vol. \vldbvolume, No. \vldbissue\ %
ISSN 2150-8097. \\
\href{https://doi.org/\vldbdoi}{doi:\vldbdoi} \\
}\addtocounter{footnote}{-1}\endgroup

\ifdefempty{\vldbavailabilityurl}{}{
\vspace{.3cm}
\begingroup\small\noindent\raggedright\textbf{PVLDB Artifact Availability:}\\
The source code, data, and/or other artifacts have been made available at \url{https://github.com/dabouledidia/CRISP}.
\endgroup
}

\section{Introduction}
Approximate Nearest Neighbor (ANN) search in a high-dimensional 
space is a core component of modern data systems, supporting applications ranging from content-based image retrieval to retrieval-augmented generation (RAG) for Large Language Models (LLMs). While recent
indexing approaches for ANN search are highly effective for medium-to-high dimensional data (e.g., $D \le 256$), scaling these indices to higher dimensions remains a significant challenge. Modern foundation models typically output vectors with very high dimensionalities, such as OpenAI's text embeddings ($D=3072$) or domain-specific descriptors like Trevi ($D=4096$). Scaling vector databases to manage such vectors  breaks the performance trade-offs of existing ANN algorithms \cite{10.1016/j.cogsys.2024.101216, DBLP:journals/pvldb/SunL0XRLN25}.

\looseness=-1
In particular,  graph-based methods, such as HNSW \cite{DBLP:journals/pami/MalkovY20}, which currently dominate industry deployments due to their high query throughput, 
in very-high-dimensional regimes ($D \ge 600$)
require substantial memory to store adjacency lists alongside the raw vectors, exhibit slow sequential index construction, and often have degraded routing efficiency on complex data distributions \cite{DBLP:journals/pvldb/WangXY021, Zhang2023VBASEUO}. 

\looseness=-1
To address memory limitations, recent research has pivoted toward methods based on subspace partitioning \cite{DBLP:journals/pacmmod/WeiLLPP25, Babenko2012, DBLP:journals/debu/0001C23} and quantization \cite{DBLP:journals/pacmmod/GaoL24, 10.1109/TPAMI.2010.57, DBLP:conf/cvpr/GeHK013, DBLP:conf/icde/PaparrizosELEF22}. Unlike graph methods, these approaches offer compact memory footprints through inverted index structures that scale linearly with dataset size and cache-efficient linear access patterns. Two prominent examples are SuCo (Subspace Collision) \cite{DBLP:journals/pacmmod/WeiLLPP25} and RaBitQ (Randomized Bit Quantization) \cite{DBLP:journals/pacmmod/GaoL24}. Nevertheless, applying these methods to thousands of dimensions reveals two practical limitations: they either struggle to process highly correlated features effectively, or their  preprocessing costs are too high.

SuCo's framework \cite{DBLP:journals/pacmmod/WeiLLPP25} assumes that the dimensions carry independent information, but real-world high-dimensional embeddings often violate this assumption. In correlated feature spaces (e.g., Gist and Fashion-MNIST), variance concentrates heavily in a small fraction of dimensions, causing subspaces to capture largely redundant information and rendering the collision proxy unable to discriminate between neighbors. This effect is illustrated in our evaluation (Figure~\ref{fig:recall_qps}), where retrieval quality of SuCo exhibits a clear recall ceiling on correlated 
datasets.

To handle highly correlated feature spaces, Optimized Product Quantization (OPQ) \cite{DBLP:conf/cvpr/GeHK013} and recent randomized quantization frameworks like RaBitQ \cite{DBLP:journals/pacmmod/GaoL24} apply global orthogonal rotations to redistribute variance across all dimensions. While effective, this transformation requires $O(ND^2)$ time for $N$ vectors of dimensionality $D$. This quadratic complexity to $D$ creates a significant preprocessing overhead for modern representation-learning models which produce embeddings with thousands of dimensions.
Since RaBitQ applies this transformation indiscriminately to all datasets, the $O(ND^2)$ overhead is paid 
even on naturally uncorrelated data where the original distribution is already suitable for indexing. 

\begin{figure}[htbp]
    \label{fig:crisp-arch}
    \centering
        \vspace{-0.1in}
\includegraphics[width=0.35\textwidth]{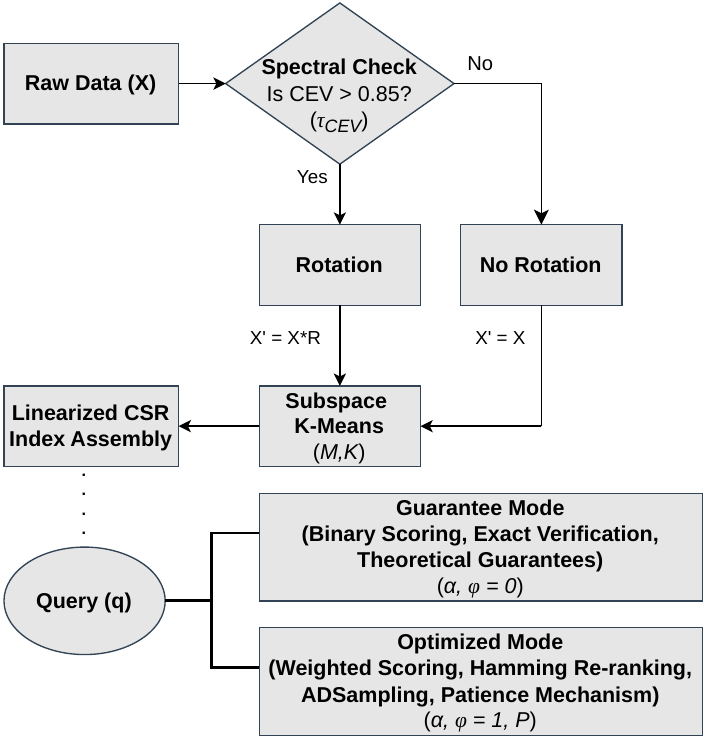}
    \vspace{-0.1in}    \caption{Overview of the CRISP Architecture.}
    \label{fig:CRISP-architecture}
\end{figure}

In this paper, we argue that a one-size-fits-all preprocessing pipeline is fundamentally incompatible with the demands of very-high-dimensional ANN search. We propose CRISP, an {\em adaptive} framework designed for feature spaces of very high dimensionality that dynamically adjusts its preprocessing strategy based on the correlation structure of the data, achieving robust retrieval, while whenever unnecessary bypassing the $O(ND^2)$ preprocessing cost.

CRISP's architecture is described in Figure \ref{fig:CRISP-architecture}.
In its preprocessing phase, CRISP employs a lightweight heuristic that analyzes data correlation to decide the necessity of variance redistribution.
CRISP fundamentally bridges two complementary paradigms: it inherits the low indexing cost of collision-based methods \cite{DBLP:journals/pacmmod/WeiLLPP25} and the distributional robustness of quantization approaches \cite{DBLP:journals/pacmmod/GaoL24}. 

After applying the rotation decision mechanism and potentially applying rotation, CRISP partitions the feature space into $M$ disjoint subspaces. Within each subspace, we employ an Inverted Multi-Index (IMI) partitioning scheme by decomposing the subspace into lower-dimensional segments and performing independent k-means clustering on each to create a fine-grained grid of cells. This approach builds on the subspace collision paradigm introduced by SuCo as a proxy for Euclidean distance, while introducing optimizations specifically tailored for the high-dimensional and correlated distributions where existing indices fail to scale.

We couple the subspace partitioning structure with a highly optimized system architecture designed to maximize cache efficiency and instruction-level parallelism. We introduce a cache-coherent Compressed Sparse Row (CSR) index structure for each subspace, replacing the pointer-chasing overhead of traditional hash-based inverted lists with sequential memory access patterns. In our implementation, all point IDs for a given subspace are stored in a single contiguous array, while a secondary offsets array explicitly marks the start and end of each partition bucket. This design significantly reduces Translation Lookaside Buffer (TLB) misses during the candidate voting and collision counting phase, where the system must rapidly retrieve and aggregate point IDs from activated buckets. By linearizing the index, hardware prefetchers can stream bucket contents efficiently into cache, resulting in substantial query speedups in very-high-dimensional regimes.

To maximize query efficiency, we design a multi-stage query pipeline that progressively filters the dataset into a candidate set. The process begins 
by applying 
subspace collision counting to retrieve the set of points most likely to contain the true nearest neighbors. We then apply dual-mode verification to these filtered candidates. In Guaranteed Mode, the system performs exhaustive exact Euclidean verification on the candidate set, preserving rigorous theoretical lower bounds on recall derived from subspace collision statistics. In Optimized Mode, we prioritize throughput by introducing a rank-based weighted scoring mechanism. This refines the coarse collision counting used in SuCo into a distance-aware ranking by assigning higher importance to collisions occurring in proximal subspaces. This mode further accelerates queries through Hamming distance re-ranking, ADSampling, and a dynamic patience termination mechanism.

Our contributions can be summarized as follows:
\begin{itemize}[leftmargin=15pt]
\item \textbf{Adaptive preprocessing strategy:} We introduce a correlation-aware preprocessing mechanism,
   which selectively applies randomized orthogonal rotations only when high inter-dimensional correlation is detected. On uncorrelated data, preprocessing is bypassed entirely, avoiding the $O(ND^2)$ preprocessing overhead. This adaptive strategy contrasts with RaBitQ and OPQ's indiscriminate application of expensive global transformations.
    \item \textbf{Rigorous theoretical guarantee:} We derive a conditional recall lower bound using Hoeffding’s \cite{hoeffding1963} inequality, proving that retrieval failure probability decays exponentially with the number of subspaces. This is a strictly tighter guarantee than the polynomial bounds (Chebyshev) provided by prior subspace collision frameworks \cite{DBLP:journals/pacmmod/WeiLLPP25}, ensuring robust retrieval quality when our adaptive independence condition is met.
    \item \textbf{Multi-stage dual-mode query engine:} We design a query pipeline for efficient candidate ordering, coupled with mode-specific verification strategies. Guaranteed Mode performs exhaustive  verification to preserve rigorous recall lower bounds. Optimized Mode instead employs three complementary acceleration techniques: (i) rank-based weighted scoring, (ii) ADSampling \cite{DBLP:journals/pacmmod/GaoL23}, and (iii) dynamic patience termination to maximize throughput.
    \item \textbf{Evaluation on Very-High-Dimensional Vectors:} We rigorously evaluate CRISP against the industry-standard graph baseline (HNSW), learned rotation (OPQ), randomized quantization (RaBitQ), and the original subspace collision framework (SuCo) on datasets of dimensionality up to $D=4096$. We show that CRISP 
    achieves superior Pareto-optimal trade-offs for throughput, recall, and construction time, while maintaining a  linear memory footprint where graph methods fail to scale.
\end{itemize}

\section{Related Work}

The problem of Approximate Nearest Neighbor (ANN) search in high-dimensional spaces has been extensively studied, with approaches generally categorized into three primary families: parti\-tioning-based, quantization-based, and graph-based methods.

\subsection{Partitioning and Subspace Methods} 
\label{sec:partition}
Early approaches to ANN search focused on space partitioning to prune candidates efficiently. The Vector Approximation File (VA-File) \cite{DBLP:conf/vldb/WeberSB98} pioneered the use of scalar quantization to compress vectors and filter candidates based on lower bounds. The VA-file evolved into sophisticated partition-based indices, such as the Inverted Multi-Index (IMI) \cite{Babenko2012}, which decomposes the vector space into orthogonal subspaces and applies independent k-means clustering within each subspace to generate a fine-grained Cartesian product of cells. Each data point is assigned to its nearest centroid in every subspace, and at query time, the system identifies the closest centroids to the query in each subspace and retrieves the union of their posting lists as candidate neighbors.

Building on this partitioning paradigm, where subspace structure is used primarily for candidate pruning rather than compact vector encoding, a series of works have sought to refine how subspace proximity is measured and exploited. Query-Aware LSH (QALSH) \cite{huang2015query} and PM-LSH \cite{DBLP:journals/pvldb/ZhengZWHLJ20} improve collision bounds through subspace projections, though both encounter load balancing challenges under highly skewed distributions. SuCo \cite{DBLP:journals/pacmmod/WeiLLPP25} pushes this direction further by introducing a collision-counting metric as a proxy for Euclidean distance. However, all these approaches inherit a common limitation: they assume that local subspace proximity implies global proximity, an assumption violated by the strong inter-dimensional correlations inherent to real-world very-high-dimensional data, as we demonstrate in Section \ref{sec:exp}. In such regimes, subspaces capture redundant information, causing the collision proxy to fail.

\subsection{Quantization-based Methods} 
Unlike  partitioning methods,
which use subspace decomposition primarily for candidate pruning, quantization-based methods exploit the same decomposition principle to compress vectors into compact codes, reducing memory footprint at scale. Product Quantization (PQ) \cite{10.1109/TPAMI.2010.57} pioneered the decomposition of vectors into sub-spaces for local codebook learning, a technique foundational to established high-performance libraries such as FAISS \cite{johnson2019billion}. To reduce quantization error on correlated data, Optimized Product Quantization (OPQ) \cite{DBLP:conf/cvpr/GeHK013} learns a parametric rotation matrix to minimize distortion. However, OPQ relies on iterative optimization, which becomes too expensive as dimensionality increases. For ultra-high-dimensional vectors ($D \ge 1000$), optimizing a dense $D \times D$ rotation matrix incurs a significant training bottleneck.

\looseness=-1
Most recently, RaBitQ \cite{DBLP:journals/pacmmod/GaoL24} proposed a randomized quantization framework utilizing a global orthogonal rotation to flatten the vector energy spectrum, thereby providing theoretical error bounds for bitwise distance estimation. However, the $O(N \cdot D^2)$ complexity of the rotation
constitutes a substantial preprocessing cost for high values of $D$. Hybrid approaches, such as ScaNN \cite{DBLP:conf/icml/Guo0L0CSK20} and RoarGraph \cite{ DBLP:journals/pvldb/ChenZHJW24}, combine quantization with anisotropic scoring or graph structures. ScaNN addresses quantization error through specialized loss functions and complex query-side scoring, whereas RoarGraph inherits the high construction costs associated with graph indexing.

\subsection{Graph-based Methods} 
Graph-based indices, particularly the Hierarchical Navigable Small World (HNSW) \cite{DBLP:journals/pami/MalkovY20}, are the current industry standard for in-memory ANN search \cite{DBLP:journals/debu/0001C23}. These methods rely on a proximity graph where the search algorithm navigates greedily toward the target. Extensions like Vamana and DiskANN \cite{DBLP:conf/nips/SubramanyaDK0K19} adapt this structure for SSD storage to handle larger data. However, benchmarks show that graph performance degrades significantly as dimensionality increases \cite{DBLP:journals/pvldb/WangXY021}, since greedy routing becomes less effective as the distance differences between neighbors diminishes \cite{DBLP:conf/icdt/BeyerGLS99, DBLP:journals/pvldb/WangWCWPW24}. This fundamental limitation is exacerbated by the fact that traditional distance comparisons become statistically unreliable and computationally expensive at scale, a challenge that necessitates more robust distance comparison operations \cite{DBLP:journals/pacmmod/GaoL23}. Furthermore, storing the graph structure alongside high-dimensional vectors creates an unmanageable memory footprint \cite{10.14778/3611479.3611537}. While systems like SPANN \cite{DBLP:conf/nips/ChenZWLLLYW21} reduce memory usage by offloading data to disk, they incur high latency from disk reads during their final search stage \cite{DBLP:conf/nips/ChenZWLLLYW21, DBLP:journals/pacmmod/WangXYWPKGXGX24}.

\section{Preliminaries}
\label{sec:preliminaries}
In this section, we formalize the problem of ANN search and define the core primitives that form the basis of our adaptive framework. Table~\ref{tab:notations} summarizes the key notation used throughout the paper.

\begin{table}[t]
  \caption{Summary of Key Notations}
  \label{tab:notations}
\vspace{-2mm}
  \centering
  \small
  \begin{tabular}{c l}
    \toprule
    \textbf{Symbol} & \textbf{Description} \\
    \midrule
    $N$ & Dataset cardinality \\
    $D$ & Data dimensionality \\
    $M$ & Number of subspaces \\
    $K$ & Number of centroids per subspace \\
    $k_{\text{size}}$ & Number of true nearest neighbors \\
    $X$ & Original dataset matrix ($N \times D$) \\
    $C$ & Candidate set filtered by collision threshold $\tau$  \\
    $R$ & Randomized orthogonal rotation matrix ($D \times D$) \\
    $\tau_{CEV}$ & Cumulative Explained Variance threshold \\
    $\phi$ & Dual-mode execution flag \\
    $\alpha$ & Minimum required subspace collision percentage \\
    \bottomrule
  \end{tabular}
\end{table}

\stitle{Exact and Approximate Similarity Search.} Similarity search, often referred to as Nearest Neighbor (NN) search,  finds the most similar items in a database to a specified query object, based on a distance function $\operatorname{dist}(\cdot)$. Formally, let $X \in \mathbb{R}^{N \times D}$ be a dataset matrix comprising $N$ real-valued $D$-dimensional vectors, and let $q \in \mathbb{R}^D$ be a query vector. The objective is to identify the data vector in $X$ that minimizes the distance to the query:
\[
    \text{NN}(q) = \operatorname*{arg\,min}_{x \in X} \operatorname{dist}(q, x)
\]
Euclidean distance (L2) is the standard distance function choice in most applications~\cite{li2020approximate,wang2018survey}. This problem formulation naturally extends to $k$-Nearest Neighbor ($k$NN) search, which seeks to retrieve the set of $k$ items closest to the query.

\looseness=-1
Similarity search algorithms are generally classified into two categories: exact and approximate~\cite{wang2016learning}. Exact methods guarantee the retrieval of the true nearest neighbors; however, they often incur high computational costs that render them unsuitable for applications that require high query throughput. Conversely, ANN methods trade a marginal loss in retrieval accuracy for significant gains in query throughput. In most large-scale, high-dimensional scenarios, approximate solutions are preferred as they enable low-latency processing while maintaining sufficient accuracy~\cite{li2020approximate, wang2016learning}. Balancing this trade-off between retrieval precision and efficiency is the central challenge in the design of modern ANN systems~\cite{aumuller2017ann}.

\stitle{Subspace Collision Framework.} The Subspace Collision (SuCo) framework \cite{DBLP:journals/pacmmod/WeiLLPP25} approximates the proximity between two vectors by decomposing the $D$-dimensional space into $M$ disjoint subspaces. In each subspace $m$, an IMI scheme further splits the subspace into two halves, learning a codebook of $K$ centroids per half via k-means. Each data point $x$ is represented by its nearest centroid in both 
halves of every subspace, mapping it to a discrete tuple of cell indices $u(x) = [u_1(x), \dots, u_M(x)]$. During retrieval, grid cells are explored in ascending order of their aggregated Euclidean distance to the query sub-vector, using a Dynamic Activation algorithm (a variant of the Multi-Sequence algorithm) until a sufficient candidate set is retrieved. The proximity between a query $q$ and a point $x$ is estimated by the {\em collision count} $S_{\text{col}}(q, x)$: the number of subspaces in which $x$'s assigned cell is activated during retrieval. Formally:

\begin{equation}
    S_{\text{col}}(q, x) = \sum_{m=1}^{M} \mathds{1} \left( u_m(x) \in \mathcal{U}_m(q) \right)
\end{equation}

This integer score serves as a 
probabilistic estimator
for Euclidean distance, effectively quantizing proximity into $M+1$
discrete levels. 
Existing frameworks provide theoretical guarantees on retrieval failure via Chebyshev's inequality, though these polynomial bounds assume that collisions across subspaces are independent events. This assumption is violated when features across different subspaces are correlated, reducing the SuCo's discriminative power.

\stitle{Candidate Refinement.} Refinement refers to the re-ranking of the candidate set $C$ identified during the subspace collision filtering stage. 
Since the initial collision counts are discrete integers (and thus provide low-resolution ranking), refinement employs higher-precision distance estimators, such as Hamming distance on binary signatures or ADSampling on float vectors, to strictly order and prune candidates before the final top-$k$ selection.

\stitle{Randomized Orthogonal Rotation.} To mitigate the quantization error caused by uneven variance distribution in correlated data, RaBitQ employs a randomized orthogonal transformation. Let $G \in \mathbb{R}^{D \times D}$ be a random Gaussian matrix with entries $G_{ij} \sim \mathcal{N}(0,1)$. The rotation matrix $R$ is obtained via QR decomposition of $G$: $G = Q \cdot R'$, where we set $R \leftarrow Q$. The transformed dataset $X' = \{x \cdot R \mid x \in X\}$ exhibits the property that the variance of each dimension is approximately equalized. 
While this ensures that subsequent quantization steps operate on statistically well-behaved data, applying this global transformation requires a dense matrix multiplication with complexity $O(N D^2)$. For very-high-dimensional datasets (e.g., $D=4096$), this preprocessing step imposes a prohibitive computational overhead. In CRISP, we utilize this transformation adaptively, applying it only when the inter-dimension correlation (measured against $\tau_{CEV}$) necessitates it.

\stitle{Binary Quantization.} Binary Quantization (BQ) \cite{Gong2011} maps continuous vectors to compact bitstrings $b(x) \in \{0,1\}^D$ via a sign function, producing a highly compressed vector representation. The dissimilarity between two binary codes is efficiently measured using the Hamming distance $\|\cdot\|_H$, which can be computed using fast hardware-level bitwise XOR and popcount operations. In CRISP, BQ serves as a lightweight proxy for Euclidean proximity, used to sort and prioritize the candidate set $\mathcal{C}$ during the refinement stage before expensive distance computations are performed.

\stitle{ADSampling.} ADSampling \cite{DBLP:journals/pacmmod/GaoL23} estimates the squared Euclidean distance between a query $q$ and a candidate $x$ using an incrementally increasing subset of $t$ dimensions. To prevent the premature pruning of true nearest neighbors, the framework adopts a pruning condition based on a relative error bound:
\begin{equation}
d_t^2(q, x) > r_k^2 \cdot \frac{t}{D} \left(1 + \frac{\epsilon_0}{\sqrt{t}}\right)^2
\end{equation}
where $r_k^2$ is the squared distance to the current $k$-th nearest neighbor and $d_t^2(q, x) = \sum_{i=1}^{t} (q_i - x_i)^2$ is the partial sum. CRISP sets the safety margin $\epsilon_0 = 2.1$ following the implementation in \cite{DBLP:journals/pacmmod/KuffoKB25}. The engine performs these checks at fixed intervals of $t = 32$ dimensions. This allows the refinement stage to progressively discard non-neighbor candidates with high statistical confidence, significantly reducing the number of full-dimensional distance computations required.

\section{The CRISP framework}
\looseness=-1
This section presents CRISP, our adaptive indexing framework which reconciles the efficiency of subspace partitioning with the robustness of randomized quantization. CRISP, designed for 
very high dimensional datasets 
($D \geq 600$),  consists of three distinct phases:
\begin{enumerate}[leftmargin=15pt]
    \item \textbf{Correlation-Aware Preprocessing}, which adaptively transforms the data distribution to restore subspace independence.
    \item \textbf{CSR-Based Indexing}, which maps the data into a cache-coherent inverted structure to maximize memory throughput.
    \item \textbf{Dual Multi-Stage Query Pipeline}, which progressively filters candidates using vectorized operations.
\end{enumerate}

\noindent An overview of CRISP architecture is provided in Figure \ref{fig:CRISP-architecture}. Its query processing steps are summarized by Algorithm \ref{alg:dual_query}.

\subsection{Correlation-Aware Preprocessing}
\label{sec:cev}
We observe that subspace collision methods fail primarily on correlated datasets, where the energy of the vectors is concentrated in a few principal components. In high-dimensional spaces, this concentration causes subspaces to capture redundant information, which reduces the discriminative power of the index. To address this, we introduce a lightweight \textbf{Spectral Correlation Check} prior to indexing. To ensure that this step incurs negligible overhead, we compute the covariance matrix on a bounded random sample $X_{sample} \subset X$ of empirically set size to $\min(0.1N, 10^5)$. We perform Eigenvalue Decomposition to calculate the \textbf{Cumulative Explained Variance  (CEV)}, defined as the cumulative variance explained by the top 20\% of principal components:

\[
\text{CEV} = \frac{\sum_{i=1}^{k} \lambda_i}{\sum_{j=1}^{D} \lambda_j}, \quad \text{where } k = \lfloor 0.2 \cdot D \rfloor
\]

Based on this spectral analysis, the framework dynamically selects the optimal construction path. We utilize an experimentally derived threshold of $\tau_{CEV}=0.85$ (justified via an ablation study in Subsection \ref{sec:ablation}).
This value serves as a conservative proxy for identifying data distributions where inter-dimension correlation is strong enough to invalidate our theoretical error bounds, which we define and analyze formally in Section ~\ref{sec:theorem}. When the CEV exceeds this threshold, CRISP detects a significant deviation from isotropy and triggers "Variance Redistribution". In this case we apply the randomized orthogonal rotation $R$ defined in Section ~\ref{sec:preliminaries}, generating a transformed dataset $X'=XR$. This transformation effectively redistributes variance uniformly across the vector space, ensuring no single dimension dominates the partitioning process.

Conversely, for distributions where the CEV falls below $0.85$,  CRISP concludes that the natural distribution is sufficiently dispersed. In such cases, we bypass the rotation step entirely, thereby avoiding the $O(N D^2)$ computational overhead and proceeding directly to indexing.
This integrated design offers a distinct architectural advantage over rigid quantization frameworks like RaBitQ. While RaBitQ treats rotation as a decoupled, external preprocessing step, requiring the materialization of a transformed dataset copy, CRISP persists the rotation matrix directly within the index metadata. This enables the query engine to toggle between native and rotated modes without external dependencies. 

Crucially, this integration ensures superior memory scalability. While decoupled pipelines (e.g., RaBitQ) typically treat rotation as an external preprocessing step that requires the materialization of a second $N \times D$ dataset, CRISP performs the transformation in-place during index construction. This is achieved by iterating through the dataset vector-by-vector and utilizing a small thread-local buffer of size $D$ to store intermediate dot products before overwriting the original memory addresses. Consequently, peak memory usage in CRISP never exceeds the raw dataset size ($ND$), whereas separated pipelines often reach double this size ($2ND$).

\begin{algorithm}[t]
\caption{CRISP's Dual-Mode Query Execution}
\label{alg:dual_query}
\begin{algorithmic}[1]
\footnotesize
\Require Query $q$, Mode $\phi$, Index $\mathcal{I}$
\Ensure Top-$k$ Candidates
\State Initialize score map $\mathcal{V} \gets \mathbf{0}$
\For{$m \gets 1$ to $M$}
    \State Decompose $q^{(m)}$ into $q_{left}, q_{right}$ 
    \State $Dist_1 \gets \text{Dist}(q_{left}, C_{left})$; $Dist_2 \gets \text{Dist}(q_{right}, C_{right})$ 
    \State $\mathcal{Q} \gets \text{PriorityQueue}(\{(Dist_1[0]+Dist_2[0], 0, 0)\})$ 
    \State $\text{rank} \gets 0$; $\text{retrieved} \gets 0$ 
    \While{$\text{retrieved} < \text{budget}$} 
        \State $(cost, i, j) \gets \mathcal{Q}.\text{pop}()$ 
        \State $cell \gets \text{Combine}(i, j)$; $\text{rank} \gets \text{rank} + 1$ 
        \State $w \gets 1$ 
        \If{$\phi = 1$ \textbf{and} $\text{rank} \le k_{size}$} 
            \State $w \gets 2$ 
        \EndIf
        \For{$id \in \mathcal{I}.\texttt{ids}_m[cell]$} 
            \State $\mathcal{V}[id] \gets \mathcal{V}[id] + w$ 
            \State $\text{retrieved} \gets \text{retrieved} + 1$ 
        \EndFor
        \State $\mathcal{Q}.\text{push}(\text{next candidates from } Dist_1, Dist_2)$ 
    \EndWhile
\EndFor
\State Candidates $\mathcal{C} \gets \{x \mid \mathcal{V}[x] \ge \tau\}$ 
\If{$\phi = 1$} 
    \State $\mathcal{C} \gets \text{SortByHamming}(\mathcal{C})$ 
\EndIf
\For{$x \in \mathcal{C}$} 
    \State \textbf{Verify} using $\text{ADSampling}(\phi=1)$ or $\text{ExactL2}(\phi=0)$ 
    \If{$\phi=1$ \textbf{and} \text{PatienceReached}} \State \textbf{break}  
    \EndIf
\EndFor
\State \Return Top-$k(\mathcal{C})$ 
\end{algorithmic}
\end{algorithm}

\subsection{Cache-Coherent CSR Indexing}
\label{sec:csr}

\looseness=-1
Retrieval of high dimensional data is fundamentally memory bandwidth bound. As dimensionality increases, larger vector sizes cause traditional Inverted File (IVF) structures to trigger frequent CPU stalls and cache misses. IVF implementations in previous work store posting lists in fragmented data structures that may incur ``pointer-chasing'' overhead and 
TLB misses.

For example, SuCo \cite{DBLP:journals/pacmmod/WeiLLPP25} implements its index using a vector of unordered maps
({\footnotesize\texttt{vector<unordered\_map<pair<int,~int>,~vector<int>>>}}), \linebreak
where retrieving the IDs for a subspace cell requires a hash-table lookup
followed by an access to an independently heap-allocated vector, resulting
in scattered, non-contiguous memory accesses. RaBitQ \cite{DBLP:journals/pacmmod/GaoL24} similarly suffers from
poor spatial locality, as its search function must aggregate distance
components from multiple independent memory regions per cluster probe.
To address these ``pointer-chasing'' overheads, CRISP employs a Compressed
Sparse Row (CSR) structure that linearizes the index into a single contiguous
memory block, as illustrated in Figure \ref{fig:layouts}. 

CRISP's CSR implements the inverted index as a sparse matrix where rows correspond to cells and columns to data points. A \textit{cell} is defined as the Cartesian product of two sub-centroid sets ($K \times K$) within a subspace $m$. CRISP sets the number of centroids per sub-partition to $K=50$ based on \cite{DBLP:journals/pacmmod/WeiLLPP25}. During construction, each data point $x_n \in \{1, \ldots, N\}$ is assigned to a specific cell. To ensure physical locality, we generate $(cell, x_n)$ tuples and strictly sort the dataset based on the cell identifier. For example, if points $\{2, 4, 7, 9, 15, 22\}$ are assigned to cell $C_{i}$ and points $\{3, 8, 12, 20\}$ to cell $C_{j}$, they are reordered in memory as $\{3, 8, 12, 20, 2, 4, 7, 9, 15, 22\}$. This ensures that all identifiers belonging to the same cell are stored contiguously, eliminating the pointer-chasing overhead found in fragmented implementations. The resulting index is consolidated into two per-subspace arrays:

\begin{itemize}
[leftmargin=15pt]    \item \texttt{Offsets}: An array of size $K^2+1$, where $K$ denotes the number of centroids per sub-partition. The values \texttt{Offsets[i]} and \texttt{Offsets[i+1]} serve as integer offsets that delimit the start and end of the $i$-th cell's posting list within the data array.

    \item \texttt{Vectors IDs}: A single contiguous array of size $N$ per subspace with the sorted point identifiers in a cache-aligned block.
\end{itemize}

As shown in Algorithm ~\ref{alg:dual_query}, 
 this structure allows the query execution to iterate through a specific cell's candidates by streaming a single contiguous memory segment. By resolving segment boundaries in constant time via the \texttt{Offsets} array, CRISP maximizes hardware prefetching efficiency and shifts the bottleneck from memory latency to memory bandwidth.
CRISP's contiguous layout of candidate identifiers within the \texttt{ids} array 
allows the accumulation phase to approach theoretical memory throughput.

\begin{figure}[t]
    \centering
    \begin{subfigure}{\columnwidth}
        \centering
        \includegraphics[width=0.99\columnwidth]{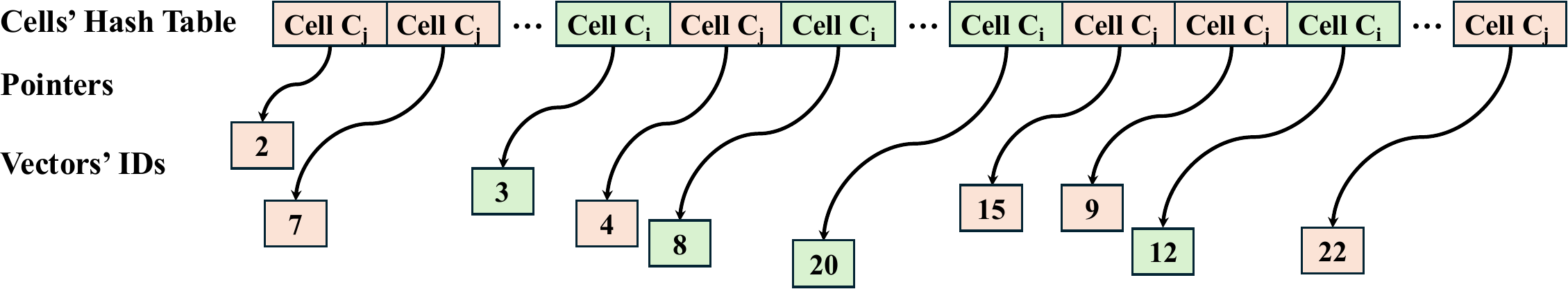}
        \caption{Traditional Fragmented Layout.}
        \label{fig:hash_based}
    \end{subfigure}
    
    \vspace{0.4cm} 
    
    \begin{subfigure}{\columnwidth}
        \centering
        \includegraphics[width=0.99\columnwidth]{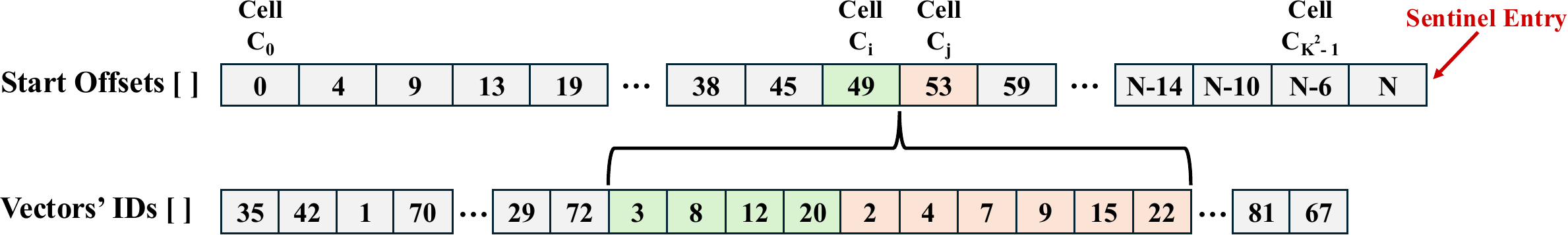}
        \caption{CRISP CSR Layout.}
        \label{fig:crisp_csr}
    \end{subfigure}
    \vspace{-0.2in}
    \caption{CRISP replaces fragmented hash-based traversal (a) by a Compressed Sparse Row (CSR) layout (b), bottleneck shift from memory latency to peak memory bandwidth.}
    \label{fig:layouts}
\end{figure}

\subsection{Multi-Stage Dual-Mode Query Engine}
\vspace{1em}
\subsubsection{Adaptive Collision Scoring}
To bridge the gap between theoretical rigor and performance in practice, CRISP uses a dual-strategy scoring mechanism determined by the execution mode $\phi$, as implemented in the query execution flow of Algorithm ~\ref{alg:dual_query}.

\stitle{Binary Scoring (Guaranteed Mode, $\phi$=0).} When {\em strict theoretical guarantees are required}, we utilize Binary Collision Counting. A data point $x$ receives a score increment of $+1$ if and only if it falls within the assigned centroid cell of query $q$ in subspace $S_i$. This ensures the total collision score follows a Binomial distribution, satisfying the independence assumptions required for the lower-bound proof of Theorem  \ref{theorem:conditionalrecall} in Section ~\ref{sec:theorem}.

\stitle{Weighted Scoring (Optimized Mode, $\phi$=1).} In the latency-critical optimized path, we relax the binary constraint to capture finer-grained proximity information using a Rank-Based Weighted Scheme. 
In contrast to former subspace collision and collision-counting frameworks \cite{DBLP:journals/tkde/TianZZ23, DBLP:journals/pvldb/ZhengZWHLJ20, DBLP:journals/pvldb/WeiPLP24, DBLP:journals/pacmmod/WeiLLPP25}, which typically employ the uniform binary scoring system (as in CRISP's Guaranteed Mode), CRISP's Optimized Mode introduces a proximity-aware weighting scheme. By prioritizing candidates found in these highest-ranked regions, CRISP allows high-quality neighbors to reach the collision threshold significantly faster than other methods.

Specifically, during the query phase, for each subspace $m$, we explore the nearest cells to the query sub-vector $q(m)$ in a specific order. We assign a collision weight $W$ based on the rank of the cell being visited. Formally, let $rank(u_m(x))$ denote the visit order of the cell assigned to point $x$. The weighting function is defined as:

\[
\mathcal{W}(q^{(m)}, x^{(m)}) = 
\begin{cases} 
2 & \text{if } rank(u_m(x)) \leq k_{size} \\
1 & \text{otherwise}
\end{cases}
\]

\begin{figure}[t]
    \centering
    \includegraphics[width=0.47\textwidth]{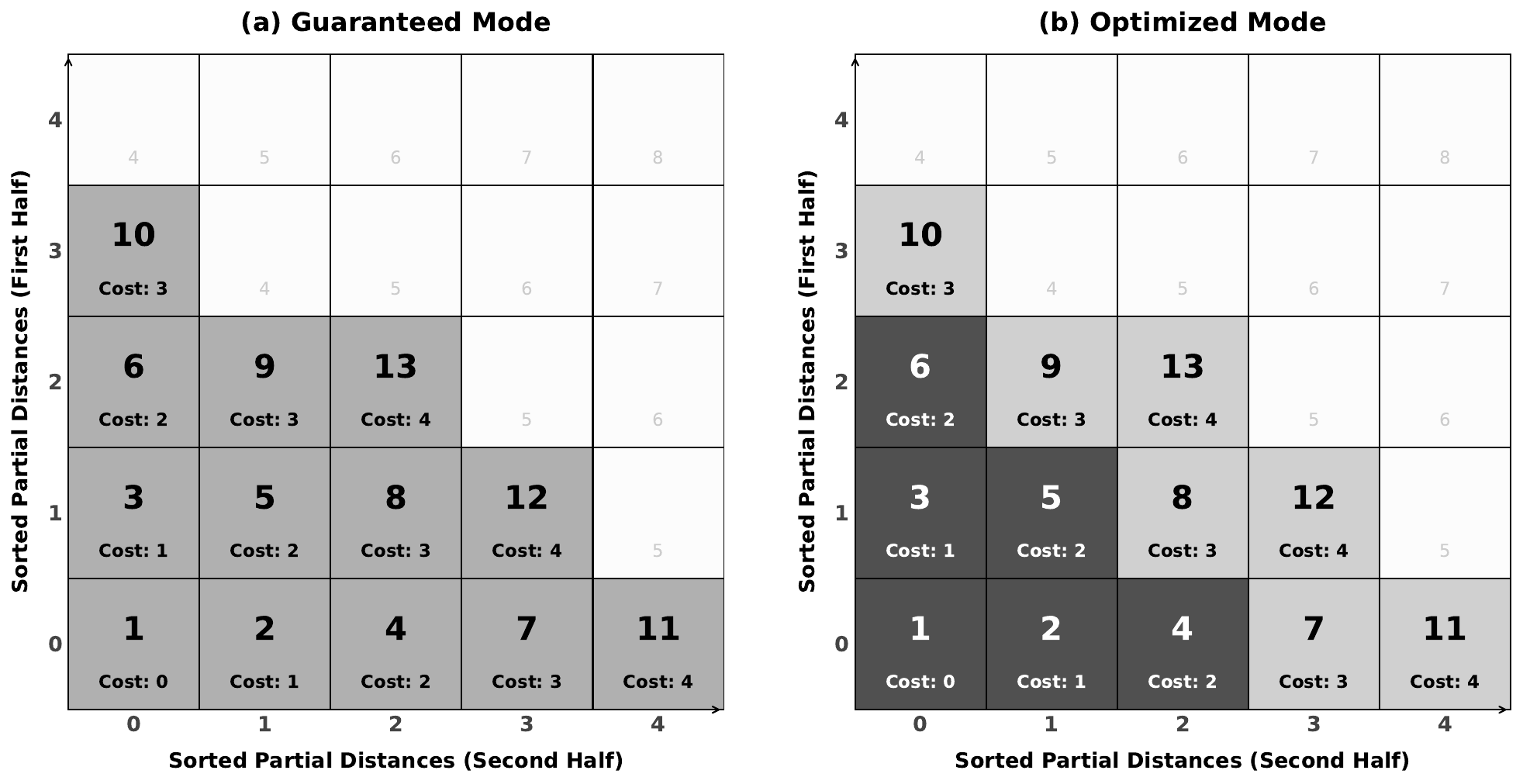}
    \vspace{-0.1in}    \caption{Candidate scoring on a $5 \times 5$ centroid grid. Axes show sorted partial distances $(0, 1, 2, \dots)$ in each subspace half. Cells are visited in ascending order of cost (Ranks 1-13) until sufficient candidates are retrieved. \textbf{(a) Guaranteed Mode:} Uniform weighting ($w=1$) for all candidates (gray). \textbf{(b) Optimized Mode:} The first $k_{size}$ cells (dark gray, Ranks 1-6) with lowest costs receive double weight ($w=2$) to prioritize likely nearest neighbors. The remaining cells (light gray) use standard weight ($w=1$).}
    \label{fig:CRISP-weight}
\end{figure}

\noindent To perform this traversal efficiently, we utilize the \textit{Multi-Sequence Algorithm}~\cite{Babenko2012} adapted for our Inverted Multi-Index structure described in Section ~\ref{sec:csr}. Within each subspace, the vector space is decomposed into two orthogonal halves, each quantized by a separate codebook. Rather than exhaustively evaluating the Cartesian product of all centroid combinations (which scales quadratically with the codebook size), we employ a two-stage strategy. First, we sort the partial squared Euclidean distances for the query against the centroids of each half independently. Second, we utilize a priority queue to incrementally expand candidate cells $(i, j)$ by summing these precomputed partial distances, starting from the closest pair. Figure~\ref{fig:CRISP-weight} depicts the cell traversal order and weighting scheme for a simplified $5 \times 5$ centroid grid. This mechanism ensures that cells are visited strictly in ascending order of their aggregated proximity to the query, allowing the algorithm to terminate the search instantly once the required candidate mass is retrieved, thereby avoiding unnecessary distance computations.

\subsubsection{Execution Pipeline}
CRISP's Dual-Mode execution pipeline
orchestrates three stages of filtering, utilizing AVX-512 vectorization and cache-optimized structures to maintain high performance as dimensionality increases.

\stitle{Stage 1: Candidate Generation.}
The query first accesses the CSR-based inverted index to accumulate collision scores. The linearized nature of the index ensures that this stage maximizes memory bandwidth during the score accumulation. The scoring mechanism adapts to the selected mode:
\begin{itemize}[leftmargin=15pt]
    \item \textbf{Optimized Mode ($\phi=1$):} We apply the Rank-Based Weighted Scoring scheme. Collisions in high-rank cells (top-$k_{\text{size}}$) receive double weight ($w=2$), prioritizing candidates that collide in the most relevant partition regions.
    \item \textbf{Guaranteed Mode ($\phi=0$):} We utilize Binary Collision Counting ($w = 1$ for all collisions). 
\end{itemize}

For each subspace, we first retrieve a small fraction of the dataset (typically between 0.1\% and 6\%) to serve as potential candidates. We then apply a strictness threshold $\tau$, an integer representing the minimum number of subspace collisions a candidate must accumulate to be retained in the final candidate set $\mathcal{C}$.
This multi-stage filtering ensures that the subsequent refinement phase processes only a highly selective fraction of the dataset, minimizing latency while maintaining high recall. A robustness fallback mechanism ensures $|\mathcal{C}| \geq k$ by retrieving the top-ranking items if extreme sparsity leads to candidate underflow.

\stitle{Stage 2: BQ-Accelerated Refinement.}
To optimize the processing order of the candidate set $C$, we utilize a conditional execution path. In Optimized Mode ($\phi=1$), we compute the Hamming distance $\|b(q) - b(x)\|_H$ for all candidates efficiently using AVX-512 intrinsics. Candidates are then sorted by these distances, ensuring that the most promising vectors are evaluated first. This prioritization is vital for the patience mechanism, enabling the search to rapidly identify nearest neighbors and trigger early termination. In Guaranteed Mode ($\phi=0$), however, this refinement step is bypassed entirely, since the theoretical guarantee requires exhaustive verification of all candidates.

\stitle{Stage 3: Mode-Specific Verification.}
The final refinement of candidates is controlled by the user-configurable execution flag $\phi$:
\begin{itemize}[leftmargin=15pt]
    \item \textbf{Optimized Mode ($\phi=1$):} We employ ADSampling combined with Dynamic Termination (Patience). The search estimates distances using subsets of dimensions and aborts if the top-$k$ results remain unchanged for $P$ consecutive verifications (default $P=40\cdot k$). 
    \item \textbf{Guaranteed Mode ($\phi=0$):} The patience mechanism and ADSampling are disabled, as the exact $L_2$ distance needs to be computed for every candidate in set $C$. 
\end{itemize}

\subsection{Complexity Analysis}
We analyze the asymptotic complexity of CRISP across three key dimensions: space efficiency, construction cost, and query latency.

\stitle{Space Efficiency.}
CRISP maintains a strictly linear memory footprint. The raw dataset requires $O(ND)$ storage, while the CSR-organized posting lists add $O(NM)$ for the per-subspace inverted index, yielding a total space complexity of $O(ND + NM)$. This matches the storage complexity of standard subspace partitioning approaches~\cite{DBLP:journals/pacmmod/WeiLLPP25}, placing CRISP in the same compact storage tier as randomized quantization baselines~\cite{DBLP:journals/pacmmod/GaoL24} and ensuring efficient scaling to massive datasets without the prohibitive memory overhead of pointer-heavy structures.

\stitle{Adaptive Construction.}
The construction complexity of CRISP is data-dependent and governed by an adaptive heuristic. 
For isotropic distributions, where dimensions are already weakly correlated, CRISP bypasses rotation to maintain a low $O(ND)$ construction cost, dominated by subspace quantization. The $O(ND^2)$ randomized rotation is triggered selectively only when the spectral check detects high feature correlation, where the alignment gain outweighs the transformation cost. By executing this rotation in-place, we maintain a peak memory footprint of $ND$ floats, effectively halving the memory overhead compared to rigid frameworks \cite{DBLP:journals/pacmmod/GaoL24, DBLP:conf/cvpr/GeHK013} that require $2ND$ space to materialize rotated dataset copies.

\stitle{Query Latency.}
Query processing is primarily dominated by candidate verification, with an asymptotic complexity of $O(|\mathcal{C}| \cdot D)$, where $|\mathcal{C}|$ is the candidate set size. While the worst-case complexity remains the same across modes, our Optimized Mode has significantly lower cost, through ADSampling and patience-based termination, which effectively reduce the number of $D$-dimensional distance calculations.

\section{Theoretical Analysis}
\label{sec:theorem}
Unlike purely heuristic indexing methods, CRISP provides a rigorous lower bound on retrieval quality when operating in Guaranteed Mode.

\begin{theorem} \label{theorem:conditionalrecall}
(Conditional Recall Lower Bound). Let $x^*$ be the true nearest neighbor of query $q$ with single-subspace collision probability $p^*$. Let $\tau$ be the selection threshold defined by $\tau = \alpha \cdot M$. If CRISP is configured in Guaranteed Mode, the probability that $x^*$ is successfully retrieved into the candidate set is lower-bounded by:

\[
P(x^* \in \mathcal{C}) \ge 1 - \exp\left( - \frac{2(Mp^* - \tau)^2}{M} \right)
\]

subject to the condition that the expected collision count $\mu = Mp^*$ strictly exceeds $\tau$.
\end{theorem}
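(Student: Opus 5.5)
The plan is to model the Guaranteed-Mode collision score of $x^*$ as a sum of independent Bernoulli indicators and apply Hoeffding's inequality to its lower tail. For each subspace $m \in \{1,\dots,M\}$, let $Z_m = \mathds{1}\left(u_m(x^*) \in \mathcal{U}_m(q)\right)$. In Guaranteed Mode ($\phi=0$) every collision contributes weight $w=1$, so the accumulated score is exactly $S_{\text{col}}(q,x^*) = \sum_{m=1}^M Z_m$. The first step is to state the modelling assumptions that make the theorem \emph{conditional}. These are the assumptions that the Correlation-Aware Preprocessing of Section~\ref{sec:cev} is designed to justify:
\begin{itemize}
\item the $Z_m$ are mutually independent;
\item each satisfies $P(Z_m=1)=p^*$.
\end{itemize}
If the per-subspace probabilities differ, I would let $p^*$ denote their average, since Hoeffding only needs the mean of the sum. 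Under these assumptions $S=S_{\text{col}}(q,x^*)$ has mean $\mu = Mp^*$ and is Binomial when the $p_m$ coincide.

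Next I relate the event of interest to a tail event. By Algorithm~\ref{alg:dual_query}, $x^* \in \mathcal{C}$ exactly when $\mathcal{V}[x^*] \ge \tau$, that is, when $S \ge \tau$. Hence
\[
P(x^* \notin \mathcal{C}) = P(S < \tau) \le P(S - \mu \le -(\mu-\tau)).
\]
The hypothesis $\mu > \tau$ makes $t = \mu - \tau$ strictly positive. I will note that the fallback mechanism ensuring $|\mathcal{C}|\ge k$ can only enlarge $\mathcal{C}$, so ignoring it gives a valid lower bound.

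I then invoke Hoeffding's inequality \cite{hoeffding1963} for independent variables $Z_m \in [0,1]$:
\[
P(S - \mu \le -t) \le \exp\left(-\frac{2t^2}{\sum_{m=1}^M (1-0)^2}\right) = \exp\left(-\frac{2(Mp^*-\tau)^2}{M}\right).
\]
Taking complements yields the claimed bound. It is also worth remarking that, for $\tau=\alpha M$, the exponent equals $2M(p^*-\alpha)^2$, which gives the exponential decay in $M$ advertised in the contributions.

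The calculation itself is routine. The main obstacle is justifying the modelling step: that the collision indicators are independent with a common success probability, when the cells activated by Dynamic Activation depend on the query and on data-driven budgets. My approach is to make the claim explicitly conditional. I will treat the data point (or query) as random relative to a fixed index, and assume the disjoint subspaces carry independent coordinates. This is the "adaptive independence condition" that the CEV check and the randomized rotation aim to approximate. I will state plainly that the theorem holds under this hypothesis rather than proving that the rotation achieves exact independence.
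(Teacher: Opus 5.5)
Your proposal is correct and follows the paper's proof essentially step for step. Under the assumed independence of subspaces, you model $S_{\text{col}}(q,x^*)$ as a sum of $M$ independent $[0,1]$-valued collision indicators with mean $Mp^*$, identify retrieval failure with $S_{\text{col}}(q,x^*) < \tau$, and apply Hoeffding's lower-tail bound with $t = Mp^* - \tau > 0$. Your extra remarks are valid refinements that the paper does not spell out: that unequal per-subspace probabilities are handled by taking $p^*$ as their average, and that the fallback mechanism can only enlarge $\mathcal{C}$.
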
 

\begin{proof}
In Guaranteed Mode, the algorithm retrieves every candidate $x$ satisfying $S_{col}(x,q) \ge \tau$. A retrieval failure occurs if and only if the stochastic collision count of the nearest neighbor falls below this threshold:

\[
P_{fail} = P(S_{col}(x^*, q) < \tau)
\]

Assuming the subspaces provide independent evidence, the random variable $S_{col}(x^*, q)$ represents the sum of $M$ independent Bernoulli trials with success probability $p^*$. We apply \textit{Hoeffding's inequality} \cite{hoeffding1963}, which bounds the tail probability of sums of bounded independent random variables. For a sum $S$ with expectation $E[S]$, Hoeffding's inequality states:

\[
P(E[S] - S \ge t) \le \exp\left( - \frac{2t^2}{M} \right)
\]

Setting $S = S_{col}(x^*, q)$ with $E[S] = Mp^*$, and choosing the deviation $t = Mp^* - \tau$ (where $t > 0$ by the theorem's condition):

\[
P(Mp^* - S_{col}(x^*, q) \ge Mp^* - \tau) \le \exp\left( - \frac{2(Mp^* - \tau)^2}{M} \right)
\]

This simplifies to $P(S_{col}(x^*, q) \le \tau)$. Since $P_{fail} = P(S_{col} < \tau) \le P(S_{col} \le \tau)$, the tail bound serves as a valid upper bound on the failure probability. Consequently, the retrieval probability $1 - P_{fail}$ satisfies the stated lower bound.
\end{proof}

\stitle{Validity of Independence Assumption.} While Hoeffding's inequality assumes independence, real-world data often exhibits high inter-dimension correlation which violates this assumption. Our Adaptive Rotation acts as a structural correction: by applying a randomized orthogonal transformation, we decorrelate the feature space and ensure that collisions across subspaces behave as independent events. This ensures that the subspace collisions satisfy the independence property required for our error bound to hold.

\stitle{Interpretation.}The bound demonstrates that the retrieval failure probability decays exponentially with the number of subspaces $M$, offering a stronger guarantee than the polynomial bounds (e.g., Chebyshev) used in prior works \cite{DBLP:journals/pacmmod/WeiLLPP25}. Moreover, it provides theoretical justification for the robust fallback mechanism (Stage 1): when $\tau \ge Mp^*$ (violating the theorem's condition), the theoretical bound becomes vacuous, necessitating the score-based fallback to prevent empty result sets.

\section{Experimental Analysis}\label{sec:exp}

\stitle{Setup.} We experimentally compare CRISP to alternative main-memory indices (RaBitQ \cite{DBLP:journals/pacmmod/GaoL24}, SuCo \cite{DBLP:journals/pacmmod/WeiLLPP25}, OPQ \cite{DBLP:conf/cvpr/GeHK013}, and HNSW \cite{DBLP:journals/pami/MalkovY20}). We systematically tested each method's key parameters to optimize the trade-off between search accuracy, construction time, and peak memory usage.  For OPQ and HNSW, we utilize the implementations provided by the FAISS library~\cite{johnson2019billion}, while RaBitQ and SuCo use their official open-source implementations. All methods are implemented in C++ and compiled using \verb|gcc| (v11.4) on Ubuntu 22.04 LTS. We used the optimization flags \verb|-O3| and \verb|-fopenmp| for all baselines, ensuring multi-threaded execution. All methods were compiled with their optimal available flags: CRISP, SuCo, and FAISS (OPQ/HNSW) with \verb|-march=native| and \verb|-mavx512f|, and RaBitQ with \verb|-mavx2|. The experiments were conducted on a machine equipped with an 11th Gen Intel® Core™ i7-11700K processor (3.60 GHz), 32 GB of RAM, and AVX-512 support.
For each competitor method, we  extended beyond the ranges recommended in the original documentations to thoroughly evaluate performance  across diverse operating regimes.

\textbf{CRISP.} We varied the data subspace structure (tested in five configurations per dataset by adjusting the number of subspaces), the collision ratio $\alpha$ controlling query coverage (spanning $[0.001, 0.06]$), and the minimum-collisions-percentage (in $[0.1, 0.6]$), which controls how many subspace collisions a candidate requires to be selected. We tested both modes of CRISP: Guaranteed Mode (with theoretical guarantees and exhaustive verification) and Optimized Mode (with weighted scoring and early termination).

\textbf{SuCo.} We tested the collision ratio $\alpha$ in the range $[0.02, 0.06]$ and the candidate ratio $\beta$ in $[0.003, 0.006]$, using the same five subspace configurations as CRISP. These ranges were selected to cover and slightly extend the recommended ``best ranges'' of $\alpha \in [0.03, 0.05]$ and $\beta \in [0.001, 0.005]$ suggested by the authors. By extending these boundaries, we ensured that the optimal performance point of SuCo for each dataset was achieved, while being aligned with the paper's guidelines \cite{DBLP:journals/pacmmod/WeiLLPP25}. SuCo was evaluated across all configurations within these parameter intervals, retaining only the Pareto-optimal points along the recall-QPS and recall-construction time trade-off curves.

\textbf{RaBitQ.} We tested multiple cluster sizes ($N_{\text{list}}$) spanning the range from 32 to 4,096, covering the wide variety of dataset cardinalities in our benchmarks and including the $N_{\text{list}} = 4{,}096$ configuration recommended by the authors for million-scale datasets. We further compared fast training (2 k-means iterations) against standard training (20 iterations) to evaluate the robustness of the randomized codebook construction. To ensure a comprehensive Pareto frontier, we tested the number of query probes ($N_{\text{probe}}$) in powers of two, from 1 up to the total cluster count.

\textbf{HNSW.} We tested graph densities up to $M=64$, extending beyond the recommended maximum of 48 to evaluate performance in very-high-dimensional regimes where higher connectivity may be necessary. We also varied construction search depth ($\text{efConstruction} \in \{32, 64, 128\}$) and query search depth ($\text{efSearch} \in \{32, 64, 128, 256\}$) across practical ranges. We used the FAISS implementation \cite{johnson2019billion}.

\textbf{OPQ.} We configured the sub-quantizer count $M$ to match our five subspace partitions, as OPQ requires $M$ to divide the data dimensionality. The encoding precision was tested at 8 bits per sub-vector (the FAISS default) and 6 bits (a supported lower-precision alternative documented for memory-constrained settings). The number of OPQ training iterations was tested at the FAISS default of 50 and a reduced value of 20 to evaluate convergence sensitivity across datasets. A per-run timeout of 1 hour was enforced on index construction; runs exceeding this limit were recorded as failed.

\subsection{Datasets}
To evaluate CRISP across diverse high-dimensional regimes, we selected nine datasets spanning varying cardinalities, modalities, and intrinsic dimensionalities. Table~\ref{tab:datasets} presents the dataset specifications and the sources from which we obtained each dataset and the corresponding query sets.
We characterize each dataset by its Local Intrinsic Dimensionality (LID), estimated via the Maximum Likelihood Estimation (MLE) estimator of Levina \& Bickel \cite{DBLP:conf/nips/LevinaBickel04} on 1,000 randomly sampled queries with $k=100$ neighbors. LID measures the
effective dimensionality of the data's neighborhood structure, independent of
the raw embedding dimension $D$. 

\begin{table}[h]
    \caption{Dataset Characteristics and LID}
    \label{tab:datasets}
    \vspace{-0.05in}
    \centering
    \resizebox{\linewidth}{!}{%
    \begin{tabular}{@{}l l c c c c c@{}}
        \toprule
        \textbf{Dataset} & \textbf{Type} & \textbf{Card. ($N$)} & \textbf{$D$} & \textbf{\#Queries ($Q$)} & \textbf{$k_{\text{size}}$} & \textbf{LID} \\
        \midrule
        \textbf{Gist \cite{gist-dataset}}                 & Image & 1,000,000 & 960   & 1,000  & 100 & 44.81 \\
        \textbf{Simplewiki-OpenAI \cite{datasets}}    & Text  & 260,372   & 3,072 & 1,000  & 100 & 27.49 \\
        \textbf{Trevi \cite{datasets}}                & Image & 90,120    & 4,096 & 1,000  & 100 & 24.85 \\
        \textbf{Ccnews-nomic \cite{datasets}}         & Text  & 495,328   & 768   & 1,000  & 100 & 22.94 \\
        \textbf{Agnews-mxbAI \cite{datasets}}         & Text  & 769,382   & 1,024 & 1,000  & 100 & 20.92 \\
        \textbf{Imagenet \cite{datasets}}             & Image & 1,281,167 & 640   & 1,000  & 100 & 20.54 \\
        \textbf{Gooaq-distilroberta \cite{datasets}}  & Text  & 1,475,024 & 768   & 1,000  & 100 & 17.09 \\
        \textbf{Fashion-MNIST \cite{datasets}}        & Image & 60,000    & 784   & 10,000 & 100 & 15.26 \\
        \textbf{MNIST \cite{datasets}}                & Image & 69,000    & 784   & 200    & 100 & 14.07 \\
        \bottomrule
    \end{tabular}
    }
\end{table}

As shown in Table~\ref{tab:datasets}, \textbf{Gist} is the most challenging dataset ($\text{LID} \approx 44.8$), far exceeding its moderate-to-high $D=960$. \textbf{Simplewiki-OpenAI} and \textbf{Trevi} follow ($\text{LID} \approx 27.5$ and $24.9$), with Trevi's high $D=4{,}096$ largely explained by inter-dimensional correlations rather than true complexity. \textbf{Ccnews-nomic}, \textbf{Agnews-mxbAI}, and \textbf{Imagenet} occupy the mid-range ($\text{LID} \approx 20$--$23$), reflecting typical dense retrieval embeddings,
while \textbf{Gooaq-distilroberta} sits slightly lower ($\text{LID} \approx 17.1$),
despite its large cardinality of 1.47M vectors. \textbf{MNIST} and \textbf{Fashion-MNIST} anchor the low end ($\text{LID} \approx 14$--$15$), consistent with their well-known manifold structure.
\begin{figure*}[t]
    \centering
    \includegraphics[width=0.95\textwidth]{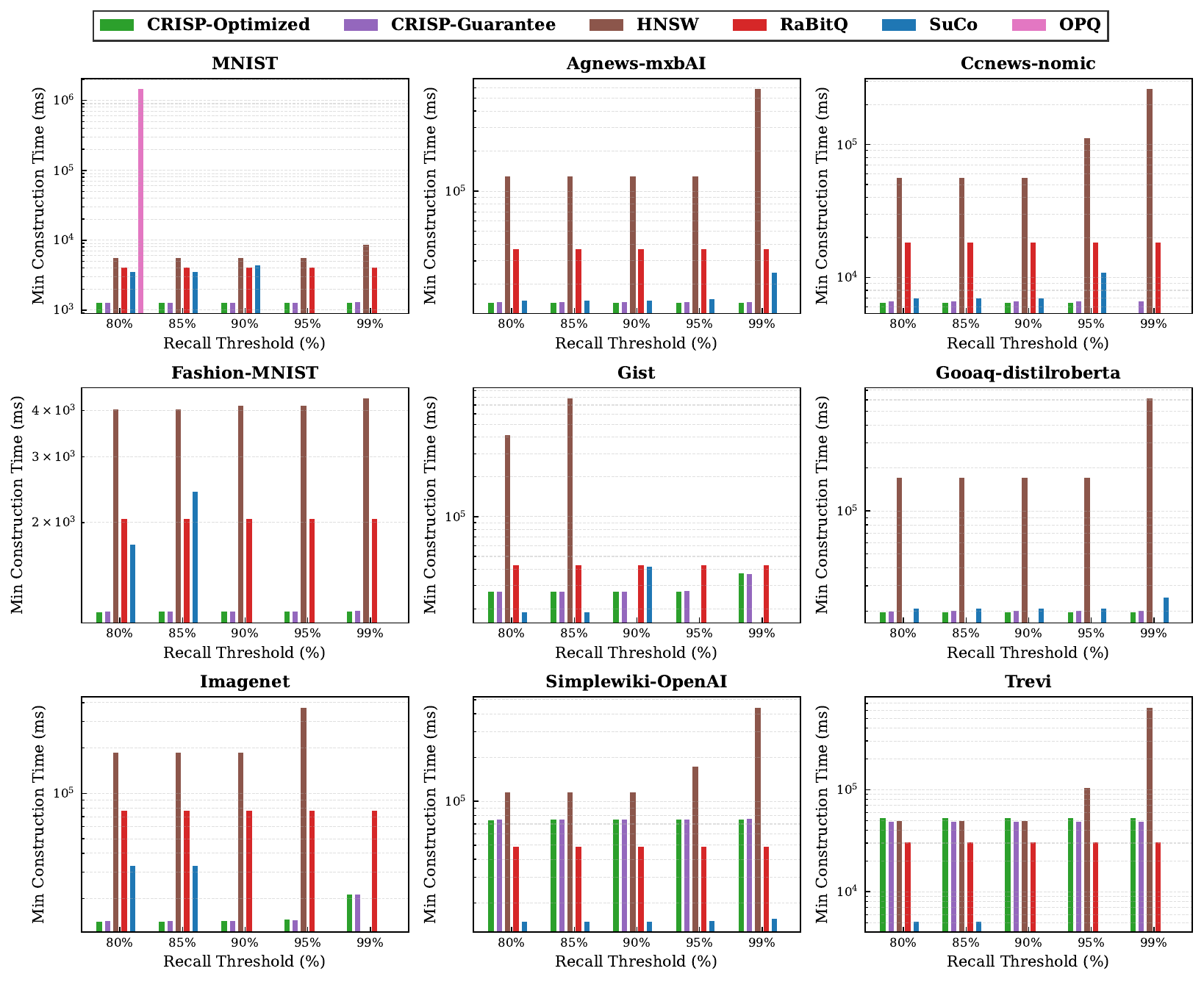}
    \vspace{-0.1in}
    \caption{Minimum index construction time required to reach specific Recall@100 thresholds (80\%, 85\%, 90\%, 95\%, 99\%) across nine benchmark datasets (log-scale y-axis). Missing bars indicate that a method could not achieve the required recall threshold with the tested configurations. Lower is better.}

    \label{fig:recall_build}
\end{figure*}

\subsection{Indexing Scalability vs. Retrieval Quality}
\label{sec:scalability}

We evaluate the cost of index construction by computing the Recall@100 vs.\ construction time Pareto frontier for each method. Figure~\ref{fig:recall_build} shows 
the lowest build time required for a parameters configuration that achieves (at least) the recall level at the x-axis.
This experiment evaluates the trade-off between retrieval accuracy and indexing cost.

\stitle{Construction Efficiency.}
CRISP's construction cost remains nearly constant across recall levels. On \textbf{Trevi} ($D\!=\!4096$), CRISP reaches recall from 85\% to 99.5\% requiring a 49s-53s construction cost, as its index build is a fixed-cost operation independent of the search-time parameters (candidate ratio, minimum score). SuCo achieves the fastest build time at $\sim$5s owing to its simple partitioning scheme, though it reaches a maximum recall of only 86\% on this dataset. By contrast, HNSW's construction cost grows sharply with recall: 49s at 92\% recall, rising to 634s at 99.4\%-a $13\times$ increase. RaBitQ's construction time is fixed at ${\sim}30$s across all recall levels, as its build cost depends on the number of clusters while the recall-controlling parameter (NProbe) is search-time only and does not
affect construction.
On \textbf{Imagenet} ($D\!=\!640$), CRISP builds its index in 14s-15s across all recall levels up to 98.6\%. HNSW's Pareto frontier appears only above 93\% recall on this dataset, requiring 187-395s to build depending on graph connectivity. RaBitQ reaches 99.3\% recall but requires 77s, while SuCo maintains a low build cost (33s) at the expense of an 86\% recall ceiling. In \textbf{Gooaq-distilroberta}, RaBitQ could not be evaluated due to its prohibitive memory requirements during construction.

On \textbf{MNIST} ($D\!=\!784$), where all methods achieve high recall, CRISP's construction completes in $\sim$1.3s for up to 99.9\% recall. SuCo is comparably fast at 3.5-4.4s, though its recall plateaus at $\sim$92\%. HNSW requires 5.5-8.6s, and RaBitQ $\sim$4s. Even on this well-structured dataset, CRISP's construction is $4{-}7\times$ faster than HNSW and $\sim$$3\times$ faster than RaBitQ at comparable recall levels. OPQ only appears in Figure~\ref{fig:recall_build}  on MNIST, as it requires approximately 24 minutes, over $1{,}100\times$ longer than CRISP at comparable recall. On all remaining datasets, OPQ either timed out or failed to reach the 80\% recall threshold under any tested configuration. Across all datasets, CRISP's flat construction profile stems from its design: the build cost is dominated by a single pass over the data for subspace encoding, with a spectral check that adds negligible overhead when rotation is unnecessary (close to 2.5\% of the total construction cost). SuCo benefits from a similarly lightweight build but is limited by its inability to reach high recall on datasets with high spectral energy concentration (high CEV), such as \textbf{Gist} and \textbf{Fashion-MNIST}, where independent subspace partitioning fails to capture the true underlying data distribution. HNSW's graph construction scales superlinearly with connectivity (M) and candidate pool size (efConstruction), while RaBitQ's k-means training cost grows with the number of clusters.

\stitle{Memory Efficiency.} Table~\ref{tab:memory_footprint} reports the search-phase Resident Set Size (RSS) for each method at the highest-recall Pareto configuration of each method. While CRISP and SuCo both share an asymptotic footprint of $O(ND + NM)$, CRISP consistently requires $\approx$1.85$\times$ less RAM in practice. This efficiency is primarily attributed to our use of a cache-coherent Compressed Sparse Row (CSR) inverted index, which eliminates the metadata overhead (i.e. fragmentation and pointers) of conventional hash-based inverted lists.
HNSW exhibits a memory complexity of $O(ND + NM_{HNSW} \log N)$, where $M_{HNSW}$ denotes the number of bidirectional links per node; consequently, increasing graph connectivity to achieve high recall on challenging datasets linearly inflates both graph storage and construction-time memory overhead. RaBitQ stores the full-precision dataset alongside binary quantized codes of dimension $B$ and $K_{q}$ cluster centroids, yielding a footprint of $O(ND + NB + K_{q}B)$, while also incurring significant peak memory usage during the cluster-training phase.
As already mentioned, RaBitQ's construction run out of memory on  \textbf{Gooaq-distilroberta}.

In summary, CRISP achieves the lowest 
memory footprint across all datasets tested -- comparable to HNSW and RaBitQ in asymptotic complexity, and well below SuCo despite a similar underlying algorithmic structure -- due to careful memory management.

\begin{table}[t]
\centering
\caption{Peak search-phase RSS (GB) per method and dataset.}
\label{tab:memory_footprint}
\vspace{-0.05in}
\small 
\setlength{\tabcolsep}{3pt}
\resizebox{\columnwidth}{!}{%
\begin{tabular}{lcccc}
\toprule
\textbf{Dataset} & \textbf{CRISP} & \textbf{SuCo} \cite{DBLP:journals/pacmmod/WeiLLPP25} & \textbf{RaBitQ} \cite{DBLP:journals/pacmmod/GaoL24}  & \textbf{HNSW} \cite{DBLP:journals/pami/MalkovY20} \\
Agnews-mxbAI      & \textbf{3.57} & 6.6  & 6.41 & 6.47 \\
Ccnews-nomic     & \textbf{1.73} & 3.2  & 3.11 & 3.24 \\
Gist        & \textbf{4.35} & 8.03 & 7.80 & 7.78 \\
Gooaq-distilroberta       & \textbf{5.13} & 9.18 & --   & 9.44 \\
Imagenet    & \textbf{3.79} & 6.78  & 6.67 & 7.08 \\
Simplewiki-OpenAI  & \textbf{3.34} & 6.4  & 6.59 & 6.38 \\
Trevi       & \textbf{1.50} & 2.95  & 3.19 & 2.96 \\
MNIST       & \textbf{0.24} & 0.46  & 0.47 & 0.47 \\
Fashion-MNIST      & \textbf{0.25} & 0.46  & 0.48 & 0.45 \\
\midrule
Space complexity & $O(ND{+}NM)$ & $O(ND{+}NM)$ & $O(ND{+}NB{+}K_{q}B)$ & $O(ND{+}NM_{HNSW}\log N)$ \\
\bottomrule
\end{tabular}
}
\end{table}

\begin{figure*}[t]
    \centering
    \includegraphics[width=0.9\textwidth]{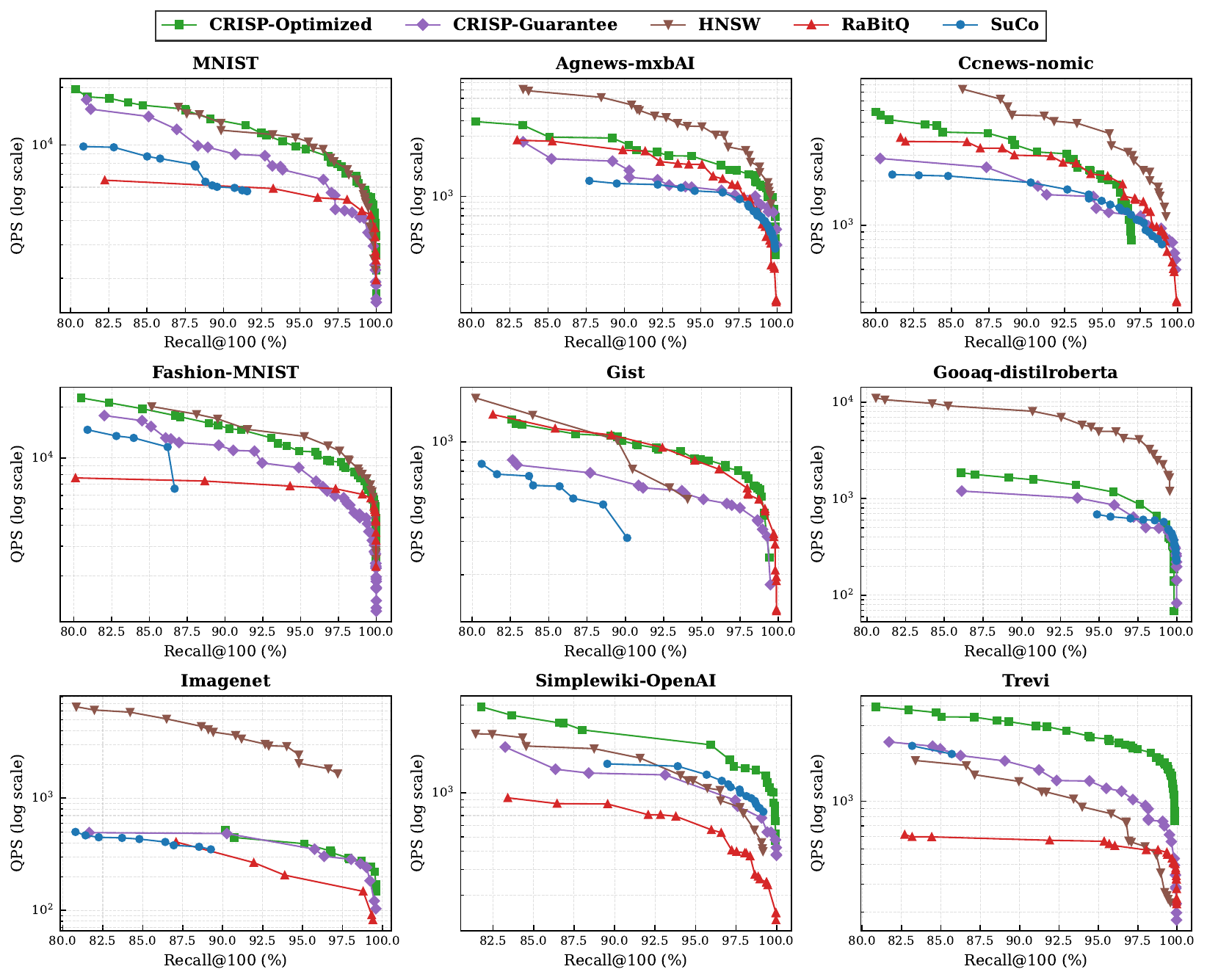}
    \vspace{-0.1in}
    \caption{Recall@100 vs.\ QPS Pareto frontiers across nine benchmark datasets (log-scale y-axis). Each subplot shows the optimal throughput-accuracy trade-off for each method. Higher and further right is better.}
    \label{fig:recall_qps}
\end{figure*}

\subsection{Retrieval Performance}
\label{sec:retrieval}

The primary trade-off in ANN search index design is between query throughput, measured in Queries Per Second (QPS), and retrieval accuracy (Recall@100). 
Figure~\ref{fig:recall_qps} demonstrates that CRISP achieves competitive or superior performance across diverse dataset categories. We evaluate both modes of CRISP denoted as CRISP-Optimized and CRISP-Guarantee. 
OPQ is omitted as it either timed out or failed to reach the 85\% recall threshold on any dataset.

CRISP's advantage is most pronounced in the highest-dimensional settings. On \textbf{Trevi} ($D\!=\!4096$), CRISP-Optimized outperforms all methods by far, being $2.95\times$ faster than the runner-up at 95\% recall (2,463 vs.\ HNSW's 834 QPS) and $6.6\times$ at 99\% recall (1,751 vs.\ HNSW's 267 QPS). Remarkably, even CRISP-Guarantee outperforms HNSW on Trevi, delivering 1,206 QPS at 95\% recall and 746 QPS at 99\% recall. On \textbf{Simplewiki-OpenAI} ($D\!=\!3072$), CRISP-Optimized achieves 2,137 QPS at 95\% recall compared to HNSW's 1,080 QPS and RaBitQ's 559 QPS. At 99\% recall, CRISP-Optimized maintains 1,319 QPS against HNSW's 458 QPS and RaBitQ's 245 QPS, while CRISP-Guarantee still achieves a highly competitive 673 QPS. In both cases, CRISP's contiguous CSR memory layout and sequential scan patterns avoid the pointer-chasing overhead that penalizes graph traversal when vector data exceeds the L2 cache.

On \textbf{Imagenet} ($D\!=\!640$), performance varies with the recall threshold. At lower recall thresholds ($\leq$95\%), HNSW leads in throughput. However, as the recall target increases, CRISP-Optimized delivers 243 QPS at 99\% recall, i.e., $2.7\times$ higher than RaBitQ's 91 QPS, while HNSW does not reach this threshold. Notably, CRISP-Guarantee matches this performance, delivering 241 QPS at 99\% recall, demonstrating that strict bounds do not sacrifice throughput at high recall targets. CRISP is also the only method to achieve $\geq$99.5\% recall on this dataset.

On \textbf{Agnews-mxbAI} ($D\!=\!768$), CRISP is the runner up after HNSW across most thresholds; at 99.5\% recall, CRISP-Optimized delivers 989 QPS against HNSW's 1,021 QPS, while CRISP-Guarantee provides 745 QPS. Both CRISP's modes consistently outperform RaBitQ by up to $2\times$.
On \textbf{Ccnews-nomic} ($D\!=\!768$) and \textbf{Gooaq-distilroberta} ($D\!=\!768$), HNSW maintains a throughput advantage at all recall levels, with CRISP being the runner-up at recall levels less than 95\% and 99\%, respectively.

On well-structured datasets, \textbf{MNIST} ($D\!=\!784$) and \textbf{FashionMNIST},
CRISP leads or comes closely second after HNSW at all recall levels. 
CRISP-Guarantee remains highly performant, outperforming RaBitQ and SuCo at most recall levels.

On \textbf{Gist} ($D\!=\!960$), a dataset with strong inter-dimensional correlation, most methods struggle to reach high recall. Neither HNSW nor SuCo achieve 95\% recall at any practical throughput level (Figure~\ref{fig:recall_qps}). CRISP-Optimized reaches 95\% recall at 799 QPS compared to RaBitQ's 718 QPS, and at 97\% recall delivers 707 QPS versus RaBitQ's 571 QPS. CRISP-Guarantee remains competitive achieving 499 QPS at 95\% recall and 450 QPS at 97\% recall.

\subsection{Ablation Study}
\label{sec:ablation}

To evaluate the impact of the design choices and parameter selection within CRISP, we conduct an ablation study on 
four datasets that span a diverse range of intrinsic dimensionalities, spectral structures, and semantic domains. Specifically, \textbf{Fashion-MNIST} and \textbf{Gist} both exhibit high spectral energy concentration (CEV $= 0.94$ and $0.91$ respectively), though they differ substantially in intrinsic dimensionality ($15.3$ vs. $44.8$), making them complementary stress tests for the adaptive rotation mechanism; \textbf{Simplewiki-OpenAI} covers an extremely high-dimensional embedding space ($D = 3{,}072$, $\text{LID} \approx 27.5$); and \textbf{Ccnews-nomic} represents a near-isotropic distribution where variance is spread uniformly across dimensions (CEV $= 0.77$, $\text{LID} \approx 22.9$).

\stitle{Sensitivity Analysis of the Adaptive Rotation Threshold.} The effectiveness of our subspace partitioning depends on how well the dataset aligns with the principal coordinate axes. To determine the optimal threshold for triggering global rotation, we evaluated a range of $\tau_{cev}$ values.
As illustrated in Figure \ref{fig:ablation}, 
$\tau_{cev} = \mathbf{0.85}$ yields the strongest overall performance, 
providing an effective decision boundary for discriminating between datasets that benefit from orthogonal rotation and those that do not. 
For datasets whose intrinsic structure forms tight local manifolds within the high-dimensional space, this threshold correctly triggers the transformation. Suppressing rotation on these datasets by raising the threshold 
yields a substantial degradation in retrieval quality.
Conversely, for datasets where variance is spread across many dimensions, orthogonal rotation provides negligible accuracy gains while imposing a non-trivial query-time overhead. 

\begin{figure}[h]
    \centering
    \includegraphics[width=0.9\linewidth]{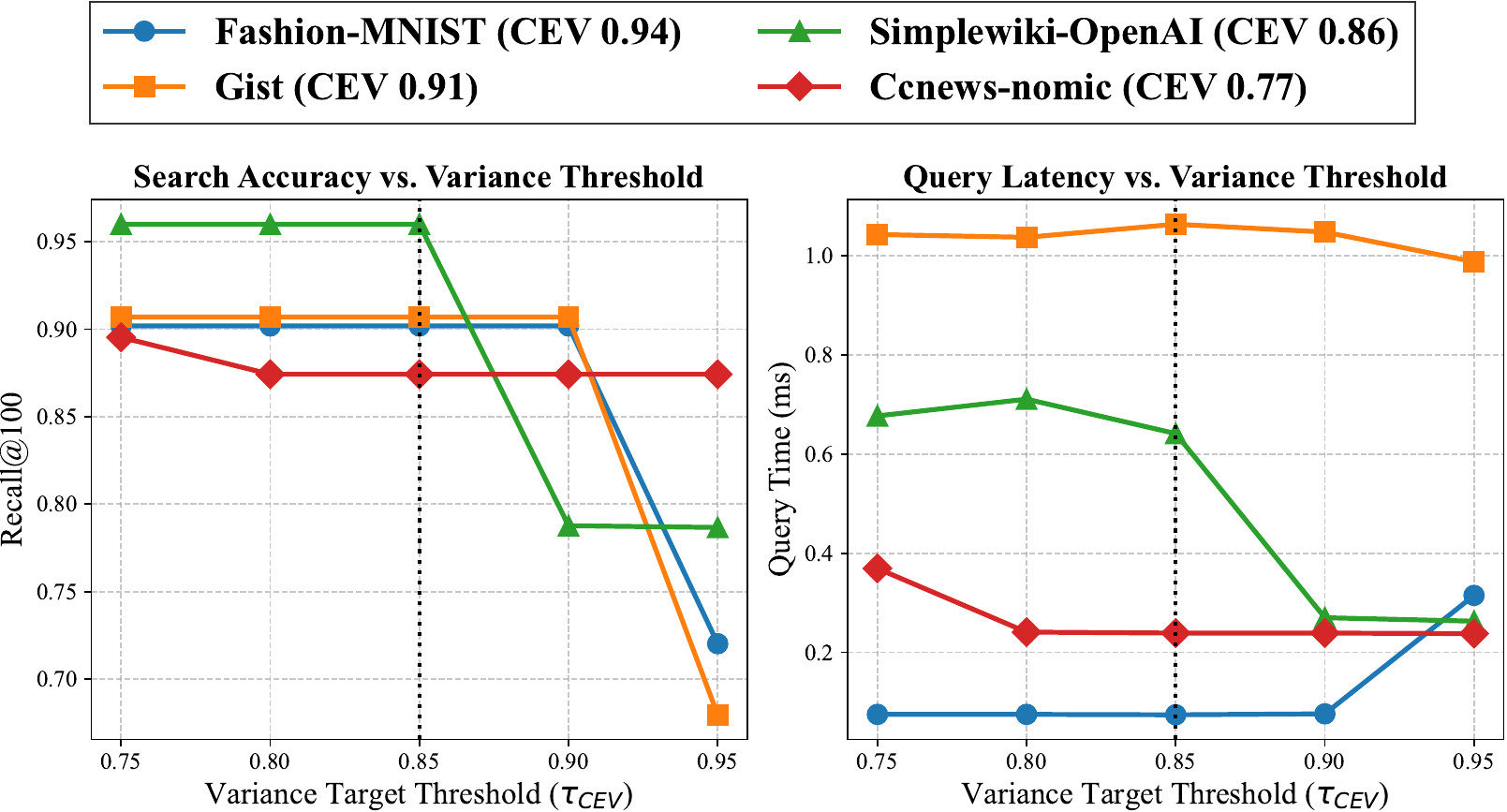}
    \caption{Impact of the Variance Target Threshold ($\tau_{cev}$) on (a) Search Accuracy (Recall@100) and (b) Query Latency (ms) across the representative datasets.}
    \label{fig:ablation}
\end{figure}

\stitle{Sensitivity Analysis of the Patience Factor.} The early termination mechanism of CRISP-Optimized
trades search accuracy for query throughput. We evaluated the patience factor $P$ used in this tradeoff, i.e., 
the number of consecutive candidates examined without an update to the top-$k$ results
and testing for early termination. As shown in Figure \ref{fig:patience_ablation}, our results identify 40 as the most effective value for this parameter across the tested datasets.
Observe that increasing 
$P$ from 20 to 40 improves recall significantly in almost all datasets; however, further increasing the factor to 60 brings no additional benefit. On the other hand, as expected, query latency  linearly increases with $P$, yielding $P=40$ the best empirical value.

\begin{figure*}[h]
    \centering
    \includegraphics[width=0.9\linewidth]{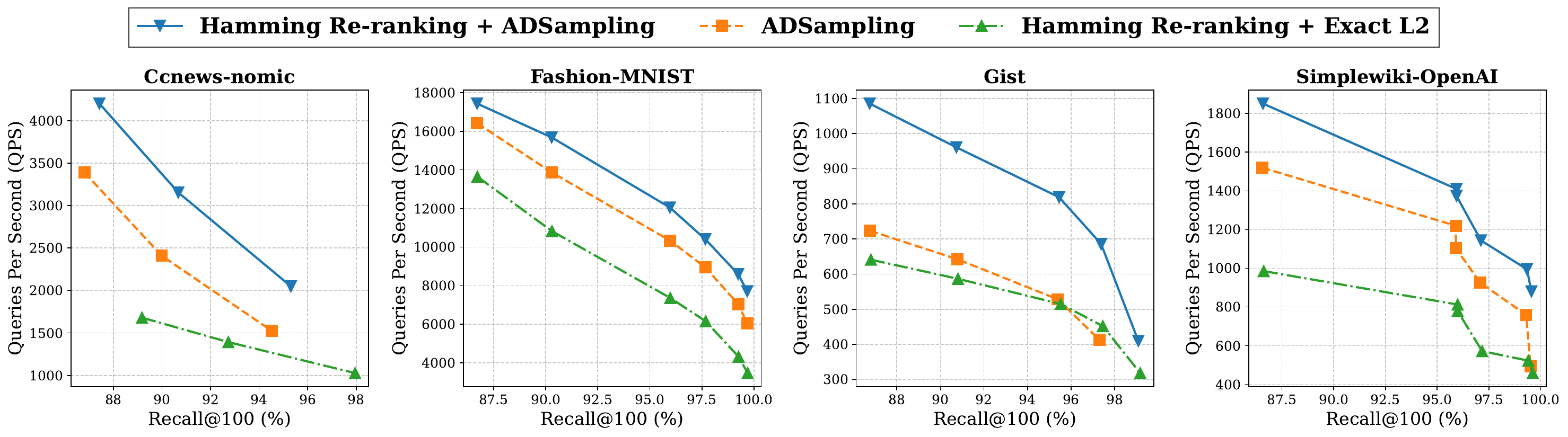}
    \caption{Ablation study of filtering subcomponents showing the Recall-QPS Pareto frontier for (a) Ccnews-nomic, (b) Fashion-MNIST, (c) Gist, and (d) Simplewiki-OpenAI.}
    \vspace{-0.01in}
    \label{fig:subcomponents_ablation}
\end{figure*}

\stitle{Impact of the Multi-Stage Filtering Pipeline.} The CRISP query pipeline employs two components to accelerate the final verification phase: binary Hamming re-ranking for candidate prioritization and ADSampling for approximate distance computation. To evaluate them, we compare the full pipeline against two configurations: one using Hamming re-ranking with exact L2 distances (ADSampling disabled), and one using ADSampling without prior Hamming re-ranking. As shown in Figure~\ref{fig:subcomponents_ablation}, both components are essential for achieving the optimal Recall--QPS Pareto frontier.

The results show that ADSampling is the primary driver of query throughput; e.g., for \textbf{Gist} at $90\%$ recall, enabling ADSampling increases throughput by over $2\times$ relative to exact L2 distance computation. 
These gains come at negligible cost to recall, as ADSampling's adaptive pruning effectively skips redundant dimensions during distance computation.
At the same time,
the effectiveness of patience-based early termination, depends critically on the evaluation order of the candidates. Switching off Hamming re-ranking
leads to measurable performance degradation. On \textbf{Gist}, removing the Hamming re-ranking stage while retaining ADSampling results in a $22\%$ drop in throughput
at equivalent recall levels. This confirms the effectiveness of binary Hamming codes  as an effective low-cost proxy for Euclidean proximity.

\begin{figure}[h]
    \centering
    \vspace{-0.05in}
\includegraphics[width=0.9\linewidth]{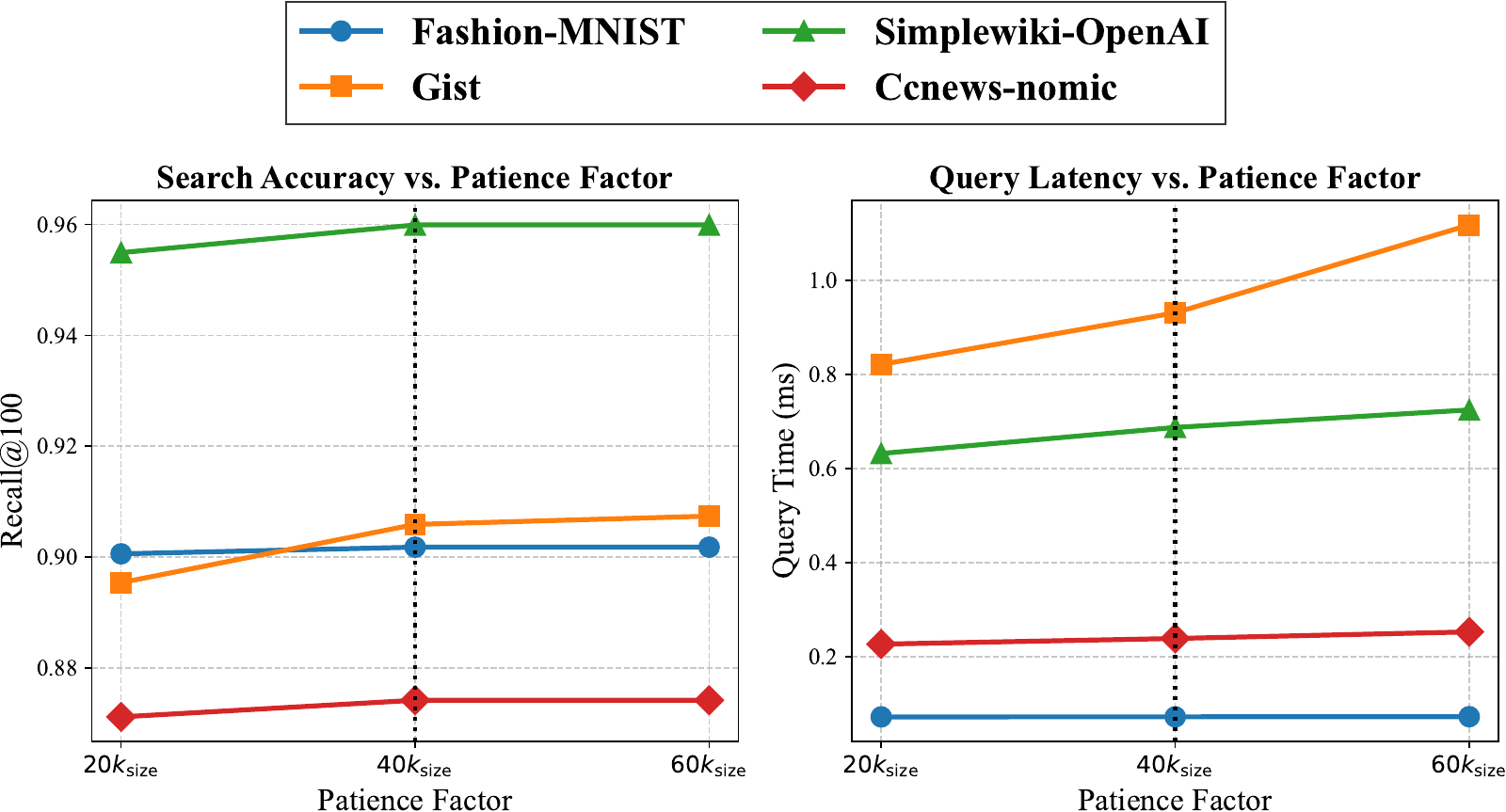}
    \caption{Impact of the Patience Factor on (a) Search Accuracy (Recall@100) and (b) Query Latency (ms) across the representative datasets.}
    \vspace{-0.1in}
\label{fig:patience_ablation}
\end{figure}

\subsection{Summary of Experimental Analysis}

Across all benchmarks, CRISP demonstrates its most decisive advantages in extreme-dimensional regimes. On datasets with $D \geq 3072$ (\textbf{Trevi}, \textbf{Simplewiki-OpenAI}), CRISP-Optimized delivers up to $6.6\times$ higher throughput than HNSW at 99\% recall, while maintaining construction times that are an order of magnitude lower. At $D \approx 960$ (\textbf{Gist}), where strong inter-dimensional correlation causes both HNSW and SuCo to fail at reaching high recall entirely, CRISP remains the only method capable of exceeding 97\% recall at practical throughput levels. Beyond retrieval performance, CRISP consistently maintains the lowest search-phase memory footprint across all tested datasets, requiring $\approx$1.85$\times$ less RAM than SuCo despite a similar algorithmic structure, and remaining more compact than both RaBitQ and HNSW. In terms of construction, CRISP exhibits superior or competitive performance compared to all methods, except on {\bf Simplewiki-OpenAI} and {\bf Trevi}, where RaBitQ and SuCo are faster to construct, but achieve low QPS rates compared to CRISP.

\looseness=-1
CRISP's limitations emerge in lower-dimensional settings ($D \leq 768$), where graph-based methods remain highly competitive. Although CRISP consistently outperforms RaBitQ and SuCo across these datasets, and closely matches HNSW on \textbf{MNIST} and \textbf{Fashion-MNIST}, it falls behind on \textbf{Ccnews-nomic} and \textbf{Gooaq-distilroberta}, where HNSW sustains a clear throughput advantage at most
recall levels. On \textbf{Ccnews-nomic} specifically, CRISP-Optimized reaches up to 97\% recall. Similarly, on \textbf{Imagenet} ($D = 640$), HNSW leads at moderate recall thresholds; however, above 97.5\% recall, CRISP-Optimized becomes the best performing method, and is the only one capable of reaching 99.5\% recall on this dataset.

\section{Conclusions and Future Work}

In this work, we presented CRISP, a high-performance indexing framework for very high dimensional data that combines correlation-aware preprocessing with cache-efficient data structures and vectorized query processing. While our results demonstrate strong performance across diverse benchmarks, the following directions remain open for extending the framework:

\stitle{Adaptive Subspace Decomposition.} CRISP currently splits the feature space into subspaces of equal size (e.g., 128 dimensions into 8 subspaces of 16 dimensions each). This uniform split ignores the fact that some dimensions carry more information than others. An unexplored idea is to adjust subspace sizes based on variance: allocate fewer dimensions to high-variance regions where finer quantization is needed, and group more dimensions together in low-variance regions. This would reduce quantization error on skewed datasets without increasing memory usage.

\stitle{Partial Variance Redistribution.} Currently, CRISP applies a global rotation to the entire feature space when correlation is detected. A promising optimization is Block-Wise Variance Redistribution, where rotation is applied only to a specific subset of dimensions exhibiting high spectral energy concentration. This approach redistributes variance locally among a target set of subspaces rather than across the entire vector, reducing the preprocessing complexity from $O(N D^2)$ to $O(N D \cdot d_{\text{sub}})$, where $d_{\text{sub}} \ll D$. Hence, finer-grained adaptability can be achieved in mixed-distribution datasets in which only particular feature subspaces are correlated.



\bibliographystyle{ACM-Reference-Format}
\bibliography{sample,liburl}

\end{document}